\newcolumntype{L}[1]{>{\raggedright\let\newline\\\arraybackslash\hspace{0pt}}m{#1}}
\newcolumntype{C}[1]{>{\centering\let\newline\\\arraybackslash\hspace{0pt}}m{#1}}
\newcolumntype{R}[1]{>{\raggedleft\let\newline\\\arraybackslash\hspace{0pt}}m{#1}}
\newcommand\independent{\protect\mathpalette{\protect\independenT}{\perp}}
\def\independenT#1#2{\mathrel{\rlap{$#1#2$}\mkern2mu{#1#2}}}
\begin{document}

\title{Simple Rules for Complex Decisions}

\author{Jongbin Jung}
\affiliation{%
    \institution{Stanford University}
}
\email{jongbin@stanford.edu}

\author{Connor Concannon}
\affiliation{%
    \institution{John Jay College of Criminal Justice}
}
\email{cconcannon@jjay.cuny.edu}

\author{Ravi Shroff}
\affiliation{%
    \institution{New York University}
}
\email{ravi.shroff@nyu.edu}

\author{Sharad Goel}
\affiliation{%
    \institution{Stanford University}
}
\email{scgoel@stanford.edu}

\author{Daniel G. Goldstein}
\affiliation{%
    \institution{Microsoft Research}
}
\email{dgg@microsoft.com}

\begin{abstract}
From doctors diagnosing patients to judges setting bail, experts often base
their decisions on experience and intuition rather than on statistical models.
While understandable, relying on intuition over models has often been found to result
in inferior outcomes.
Here we present a new method---\emph{select-regress-and-round}---for constructing simple rules that perform well for complex decisions.
These rules take the form of a weighted checklist, can be
applied mentally, and nonetheless rival the performance of modern machine learning algorithms.
Our method for creating these rules is itself simple,
and can be carried out by practitioners with basic statistics knowledge.
We demonstrate this technique with a detailed case study of judicial decisions
to release or detain defendants while they await trial.
In this application, as in many policy settings,
the effects of proposed decision rules cannot be directly observed from historical data:
if a rule recommends releasing a defendant that the judge in reality detained,
we do not observe what would have happened under the proposed action.
We address this key counterfactual estimation problem by drawing on tools from causal inference.
We find that simple rules significantly outperform judges and are on par with decisions derived from random forests trained on all available features.
Generalizing to 22 varied decision-making domains, we find this basic result replicates.
We conclude with an analytical framework that helps explain why
these simple decision rules perform as well as they do.
\end{abstract}

\maketitle
\renewcommand{\shortauthors}{Jung et al.}

\section{Introduction}
In decision-making scenarios, experts often choose a course of action
based on experience and intuition rather than on statistical
analysis~\citep{Gigerenzer2011}.
This includes doctors classifying patients based on their
symptoms~\citep{McDonald1996}, judges setting bail amounts~\citep{Dhami2003}
and making parole decisions~\citep{danziger_2011}, and managers determining
which customers to target~\citep{wubben2008instant}.
A large body of work shows that intuitive judgments are generally inferior
to those based on statistical
models~\citep{dawes1979robust,dawes1989clinical,tetlock_2005,kleinberg_2015,kleinberg_2017}.
However, decision makers have consistently eschewed formal decision models in
part because it has been difficult to create, understand, and apply them.

Here we present a simple method for constructing simple decision
rules that often perform on par with traditional machine learning algorithms.
Our \emph{select-regress-and-round} strategy results in
rules that are fast, frugal, and clear:
fast in that decisions can be made quickly
in one's mind, without the aid of a computing device; frugal in that they
require only limited information to reach a decision; and clear in that they
expose the grounds on which classifications are made.
Decision rules satisfying these criteria have many benefits.
For instance, rules that can be applied quickly
and mentally are likely to be adopted and used persistently. In medicine,
frugal rules require fewer tests, which saves time, money, and, in the case of
triage situations, lives~\citep{marewski2012heuristic}. The clarity of simple
rules engenders trust from users, providing insight into how systems work and
exposing where models may be improved~\citep{gleicher2016,sull_2015}. Clarity
can even become a legal requirement when society demands to know how
algorithmic decisions are being made~\citep{goodman2016eu,corbett-davies_2017}.

Our results add to a growing literature on \emph{interpretable machine
learning}~\cite{kim2014,kim2015,ustun_2016,letham_2015,lakkaraju_2016}.
Several methods recently have  been introduced to construct the kind of simple decision rules
we discuss here, including supersparse linear integer models
(SLIM)~\citep{ustun_2016,ustun_2017}, Bayesian rule lists~\citep{letham_2015}, and
interpretable decision sets~\citep{lakkaraju_2016}.
These methods all produce rules that are easy to interpret and to apply.
One important difference between our approach and past techniques is that our rules are also easy to create.

To illustrate our method, we begin with a case study of judicial decisions for pretrial release. We show that simple rules substantially improve upon the efficiency and equity of unaided decisions.
In particular, we estimate that judges can detain half as many defendants without appreciably increasing the number that fail to appear at their court dates.
Our simple rules perform as well as
a black-box, random forest model trained on all available data.
(We note that~\citet{kleinberg_2017} recently and independently proposed using random forests to assist judicial decisions, but they do not consider simple rules.)
We further evaluate the efficacy of our method on 22
datasets from the UCI ML repository and show that
in many cases simple rules are competitive with state-of-the-art machine learning algorithms.
We conclude with an
analytical framework that helps explain why
simple decision rules often perform well.

\section{Illustration: bail decisions}
\label{sec:bail}

As an initial example of how to create simple rules that make accurate and transparent
decisions, we turn to the domain of pretrial release determinations.
In the United States,
a defendant is typically arraigned shortly after arrest in a court appearance
where he is provided with written notice of the charges alleged by the
prosecutor. At this time, a judge must decide whether the defendant, while he
awaits trial, should be \emph{released on his own recognizance} (RoR), or
alternatively, subject to monetary bail.
In practice, if the judge rules that bail be set, defendants often await trial
in jail since many of them do not have the financial resources to post bail.
Moreover, when defendants are able to post bail, they often do so by contracting
with a bail bondsman and in turn incur hefty fees.
The judge, however, has a legal obligation
to consider taking measures necessary to secure the defendant's
appearance at required court proceedings.
Pretrial release decisions must thus balance flight risk against the high burden
that bail requirements place on defendants.
In many jurisdictions judges may also consider a defendant's threat to public safety, but that is not a legally relevant factor for the specific jurisdiction we analyze below.

A key statistical challenge in this setting is that
one cannot, with historical data alone, directly observe the effects of hypothetical decision rules.
For example, if a proposed policy recommends releasing some defendants who in reality were detained by the judge, one does not observe what would have happened had the rule been followed.
This counterfactual estimation problem---also known as offline policy evaluation~\citep{dudik_2011}---is common in many domains.
We address it here by adapting tools from causal inference to the policy setting,
including the method of \citet{rosenbaum_1983a}
for assessing the sensitivity of estimated causal effects to an unobserved
binary covariate.

Our analysis is based on 165,000 adult cases involving nonviolent offenses
charged by a large urban prosecutor's office and arraigned in criminal court
between 2010 and 2015.
This set was obtained by starting with a random sample of 200,000 cases provided to us by the prosecutor's office, and then restricting to those cases involving
nonviolent offenses and for which the records were complete and accurate.
Our initial sample of 200,000 cases does not include instances where
defendants accepted a plea deal at arraignment, obviating
the need for a pretrial release decision.
For each case, we have a rich set of attributes: 49 features describe
characteristics of the current charges (\emph{e.g.}, theft, gun-related), and
15 describe characteristics of the defendant (\emph{e.g.}, gender, age, prior
arrests). We also observe whether the defendant was RoR'd, and whether he
failed to appear (FTA) at any of his subsequent court dates. We note that even
if bail is set, a defendant may still fail to appear since he could post bail
and then skip his court date. Overall, 69\% of defendants are RoR'd, and 15\%
of RoR'd defendants fail to appear. Of the remaining 31\% of defendants for whom bail is
set, 45\% are eventually released and 9\% fail to appear. As a result, the overall FTA
rate is 13\%.

In our analysis below, we randomly divide the full set of 165,000 cases into three
approximately equal subsets; we use the first fold to construct decision rules (both simple and complex),
and the second and third to evaluate these rules, as described next.

\subsection{Rule construction}

We start by constructing traditional (but complex) decision rules for balancing
flight risk with the burdens of bail. These rules serve as a benchmark
for evaluating the simple rules we create below.
On the first fold of the data, we restrict to cases in which the judge RoR'd
the defendant, and then train a random forest model to estimate the likelihood
an individual fails to appear at any of his subsequent court dates.
Random forests are considered to be one of the best off-the-shelf classification
algorithms~\citep{fernandez2014,kleinberg_2017}, and
we fit the model on all available information about the case and the
defendant, excluding race.\footnote{%
We use the \texttt{randomForest} package in \texttt{R}, fit with 1,000 trees.
We exclude race from the presented results due to legal and policy concerns with basing decisions on protected attributes~\cite{corbett-davies_2017}. We note, however, that including race does not significantly affect performance.}
The fitted model lets us compute risk scores (\emph{i.e.}, estimated flight risk
if RoR'd) for any defendant. These risk scores can in turn be converted to
a binary decision rule by selecting a threshold for releasing individuals. One
might, for example, RoR a defendant if and only if his flight risk is below
20\%.

We now construct a family of simple rules for making release decisions.
We begin by fitting a logistic regression model that estimates a defendant's
flight risk as a function of his age and prior history of failing to appear.
These two factors are well understood to be highly predictive in this context,
but we later show how such features can be selected in a principled fashion
without domain expertise.
Specifically, we fit the following model:
\begin{align*}
  & \Pr(Y_i = 1) = \text{logit}^{-1} \Big(\beta_0  + \\
  & \hspace{5mm} \beta^{\text{priors}}_{1} H_i^{1} +
    \beta^{\text{priors}}_{2} H_i^{2} +
    \beta^{\text{priors}}_{3} H_i^{3} +
    \beta^{\text{priors}}_{4+} H_i^{4+} + \\
  & \hspace{5mm} \beta^{\text{age}}_{18-20} A_i^{18-20} +
      \beta^{\text{age}}_{21-25} A_i^{21-25} + \cdots +
      \beta^{\text{age}}_{46-50} A_i^{46-50}
    \Big),
\end{align*}
where $Y_i \in \{0, 1\}$ indicates whether the $i$-th defendant failed to appear;
$H_i^* \in \{0, 1\}$ indicates the defendant's number of past failures to appear
(exactly one, two, three, or at least four);
and $A_i^* \in \{0, 1\}$ indicates the binned age of the defendant
(18--20, 21--25, 26--30, 31--35, 36--40, 41--45, or 46--50).
For identifiability, indicator variables for zero past FTAs and age 51-and-older
are omitted. As before, this model is fit on the subset of cases in the first
fold of data for which the judge released the defendant.
Next, we rescale the age and prior FTA coefficients so that they lie in the
interval $[-10, 10]$; specifically we multiply each coefficient by the constant
\begin{equation*}
\frac{10}{\max \left(|\beta^{\text{priors}}_{1}|, \dots, |\beta^{\text{priors}}_{4+}|, |\beta^{\text{age}}_{18-20}|, \dots, |\beta^{\text{age}}_{46-50}|\right)}.
\end{equation*}
Finally, we round the rescaled coefficients to the nearest integer.

\begin{table}
  \caption{%
    A defendant's flight risk is obtained by adding the scores
    for age and prior failure to appear (FTA).
  }
  \begin{tabular*}{8cm}{@{\extracolsep{\fill}}lrclr}
    \toprule
    Feature & Score & & Feature & Score\\
    \midrule
    $18 \leq $ age $< 21$ & 8 & &
    no prior FTAs & 0 \\
    $21 \leq $ age $< 26$ & 6 & &
    1 prior FTA & 6 \\
    $26 \leq $ age $< 31$ & 4 & &
    2 prior FTAs & 8 \\
    $31 \leq $ age $< 51$ & 2 & &
    3 prior FTAs & 9 \\
    $51 \le$ age & 0 & &
    4 or more prior FTAs & 10 \\
    \bottomrule
  \end{tabular*}
\label{tab:heuristic}
  \vspace{-4mm}
\end{table}

Table~\ref{tab:heuristic} shows the result of this procedure.
For any defendant, a risk score can be computed by summing the relevant terms in the table.
Unsurprisingly, past FTAs are indeed strong predictors of future failure to appear;
an individual's risk also declines with age, in line with conventional wisdom.
These risk scores can be converted to a binary decision rule by selecting
a threshold for releasing individuals.
For example, one might RoR a defendant if and only if his risk score is below
10.5.
A graphical representation of that rule is shown in Figure~\ref{fig:policy-grid}.

\begin{figure}
  \includegraphics[width=5cm]{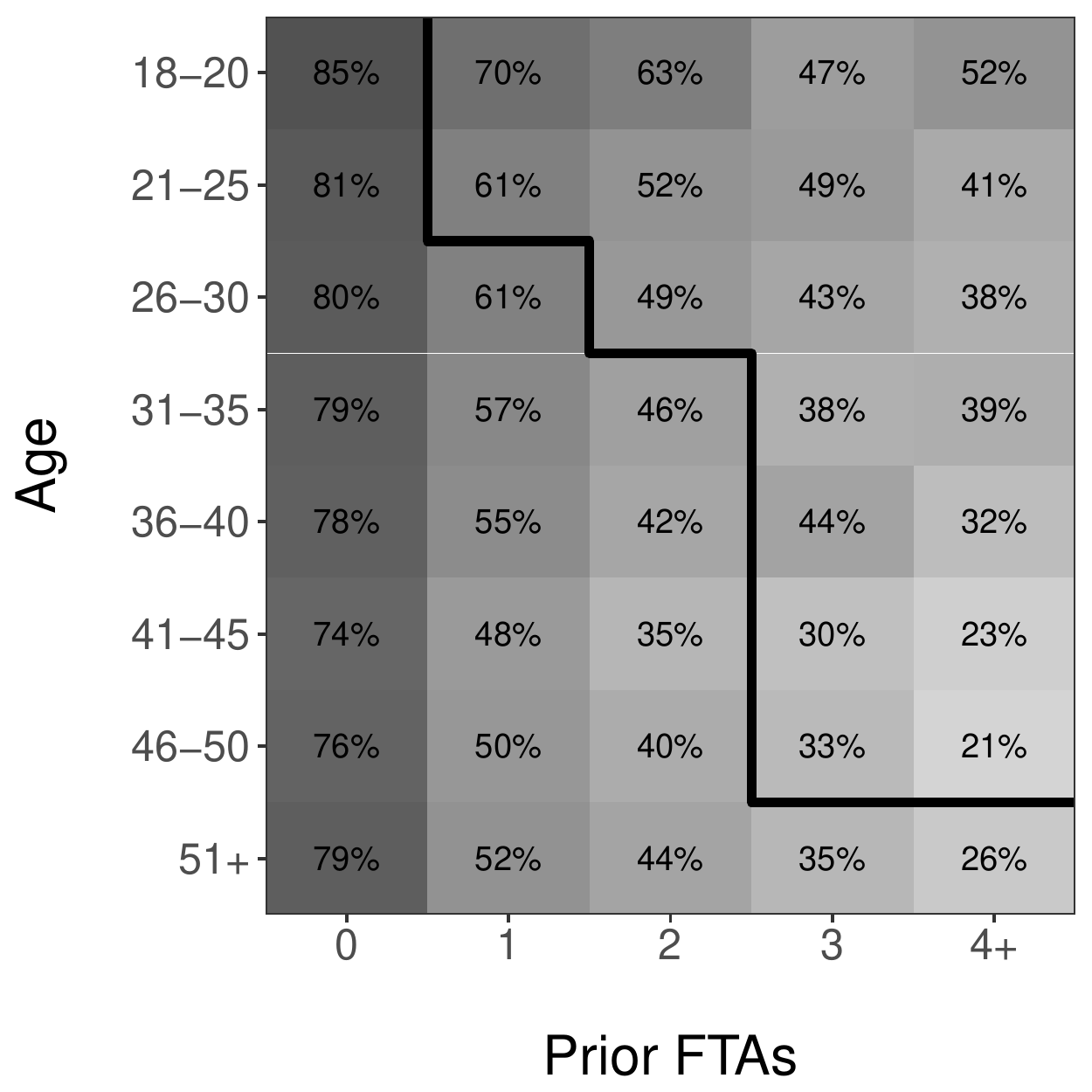}
  \caption{%
    Graphical representation of a simple rule for release decisions, based on
    setting a release threshold of 10.5 on the risk scores described in
    Table~\ref{tab:heuristic}.
    Groups to the left of the black line are those that would be released
    under the rule; for comparison, the shading and numbers show the proportion
    of defendants that are currently RoR'd in each group.
  }
\label{fig:policy-grid}
\end{figure}

\subsection{Policy evaluation}
\label{sec:evaluation}

There are two key considerations in evaluating a decision rule for pretrial
release: (1) the proportion of defendants who are released under the rule; and
(2) the resulting proportion who fail to appear at their court proceedings. It
is straightforward to estimate the former, since one need only apply the rule
to historical data to see what actions would have been
recommended.\footnote{In theory, implementing a decision rule could alter the
equilibrium distribution of defendants. We do not consider such possible
effects, and assume the distribution of defendants is not affected by the rule
itself.} For example, if defendants are released if and only if their risk
score is below 10.5, 84\% would be RoR'd;
under this rule, bail would be required of only half as many defendants
relative to the status quo.
Forecasting the proportion who would fail to appear, however, is
generally much more difficult. The key problem is that for any particular
defendant, we only observe the outcome (\emph{i.e.}, whether or not he failed
to appear) conditional on the action the judge ultimately decided to take
(\emph{i.e.}, RoR or bail). Since the action taken by the judge may differ from
that prescribed by the decision rule, we do not always observe what would have
happened under the rule. This problem of \emph{offline policy
evaluation}~\citep{dudik_2011}
is a specific instance of the fundamental problem of causal
inference.

To rigorously describe the estimation problem and our approach,
we first introduce some notation.
We denote the observed set of cases by $\Omega = \{(x_i,a_i,r_i)\}$, where $x_i$
is a case, $a_i \in  \{\text{RoR}, \text{bail}\}$ is the action taken by the
judge, and $r_i \in \{0,1\}$ indicates whether the defendant failed to appear at
his scheduled court date.
We write $r_i(\text{RoR})$ and $r_i(\text{bail})$ to mean the \emph{potential
outcomes}, what would have happened under the two possible judicial actions.
For any policy $\pi$, our goal is to estimate the FTA rate under the policy:
\begin{equation*}
  V^{\pi} = \frac{1}{|\Omega|} \sum_i r_i(\pi(x_i))
\end{equation*}
where $\pi(x)$ denotes the action prescribed under the rule.
The key statistical challenge is that only one of the two potential outcomes, $r_i = r_i(a_i)$, is observed.
We note that policy evaluation is a generalization of estimating average
treatment effects.
Namely, the average treatment effect can be expressed as
$V^{\pi_{\text{RoR}}} - V^{\pi_{\text{bail}}}$,
where $\pi_{\text{RoR}}$ is the policy under which everyone is released and
$\pi_{\text{bail}}$ is defined analogously.

Here we take a straightforward and popular statistical approach to estimating $V^{\pi}$: response
surface modeling~\citep{hill2012}.
With response surface modeling, the idea is to use a standard prediction model
(\emph{e.g.}, logistic regression or random forest) to estimate the effect on
each defendant of each potential judicial action.
The model estimates of these potential outcomes are denoted by
$\hat{r}_i(t)$, for $t \in \{\text{RoR}, \text{bail}\}$.
Our estimate of $V^{\pi}$ is then given by
\begin{equation*}
\hat{V}^{\pi} = \frac{1}{|\Omega|} \sum_i \big [
  r_i \mathbf{I}(\pi(x_i) = a_i) + \hat{r}_i (\pi(x_i))
  \mathbf{I}(\pi(x_i) \neq a_i)
\big ]
\end{equation*}
where $\mathbf{I}(\cdot)$ is an indicator function evaluating to 1 if its
argument is true and to 0 otherwise.
If the prescribed action is in fact taken by the judge, then $r_i
= r_i(\pi(x_i))$ is directly observed and can be used;
otherwise we approximate the potential
outcome with $\hat{r}_i(\pi(x_i))$.
Table~\ref{table:potential-outcomes} illustrates this method for a hypothetical
example.

Response surface modeling implicitly assumes that a judge's action is \emph{ignorable}
given the observed covariates (\emph{i.e}., that conditional on the observed covariates,
those who are RoR'd are similar to those who are not).
Formally, ignorability means that
\begin{equation*}
  (r(\text{RoR}), r(\text{bail})) \independent a \ \big | \, x.
\end{equation*}
This ignorability assumption is unavoidable, and is similarly required for methods based on propensity scores~\cite{rosenbaum_1983b,rosenbaum_1984,cassel_1976,robins_1994,robins_1995,kang_2007,dudik_2011}.
We examine this assumption in detail in Section~\ref{ssec:sensitivity},
and find that our conclusions are robust to unobserved
heterogeneity.

\begin{table}
  \caption{%
    For each defendant, $\hat{Y}_{\mathrm{RoR}}$ and $\hat{Y}_{\mathrm{bail}}$
    are model-based estimates of the likelihood of FTA under each potential
    action.  In cases where the observed action equals the proposed action, the
    observed outcome (FTA or not) is used to estimate the policy's effect; 		     otherwise, the model-based estimates are used.  The gray shading
    indicates which values are used in each instance.  The overall FTA rate
    under the policy is estimated by averaging the shaded values over all cases.
  }
\label{table:potential-outcomes}
  \begin{tabular}{C{1.3cm}C{1.3cm}C{1.3cm}C{1.3cm}C{1.3cm}}
    \toprule
    Proposed action & Observed action & Observed outcome
    & $\hat{Y}_{\mathrm{RoR}}$ & $\hat{Y}_{\mathrm{bail}}$\\
    \midrule
    RoR & RoR & \cellcolor{gray!25}0 & 20\% & 10\% \\
    Bail & Bail & \cellcolor{gray!25}1 & 80\% & 30\% \\
    Bail & RoR & 1 & 90\% & \cellcolor{gray!25}70\% \\
    RoR & Bail & 0 & \cellcolor{gray!25}30\% &  25\%\\
    RoR & RoR & \cellcolor{gray!25}0 & 20\% & 15\% \\
    \bottomrule
  \end{tabular}
\end{table}

To carry out this approach, we derive estimates $\hat{r}_i(t)$ via an
$L^1$-regularized logistic regression (lasso) model trained on the second fold
of our data.
For each individual, the model estimates his likelihood of FTA given all the observed features and the action taken by the judge.
In contrast to the rule construction described above, this time we train the
model on all cases (not just those for which the judge RoR'd the defendant) and
include as a predictor the judge's action (RoR or bail); we also include the defendant's race.\footnote{%
Although it is legally problematic to use race when \emph{making} decisions, its use is acceptable---and indeed often required---when \emph{evaluating} decisions.
The model was fit with the \texttt{glmnet} package in \texttt{R}.
The \texttt{cv.glmnet} method was used to determine the best value for the
regularization parameter $\lambda$ with 10-fold cross-validation and 1,000 values of $\lambda$.
The model includes all pairwise interactions between the judge's decision and defendant's features.
We opt for lasso instead of random forest for this prediction task because
the latter, while very good for classification, is known to suffer from poor
calibration~\citep{niculescu_2005}, which can in turn yield biased estimates
of a policy's effects.
}
Then, on the third fold of the data,
we use the observed and model-estimated outcomes to
approximate the overall FTA rate for any decision rule.

\begin{figure}
  \includegraphics[width=5cm]{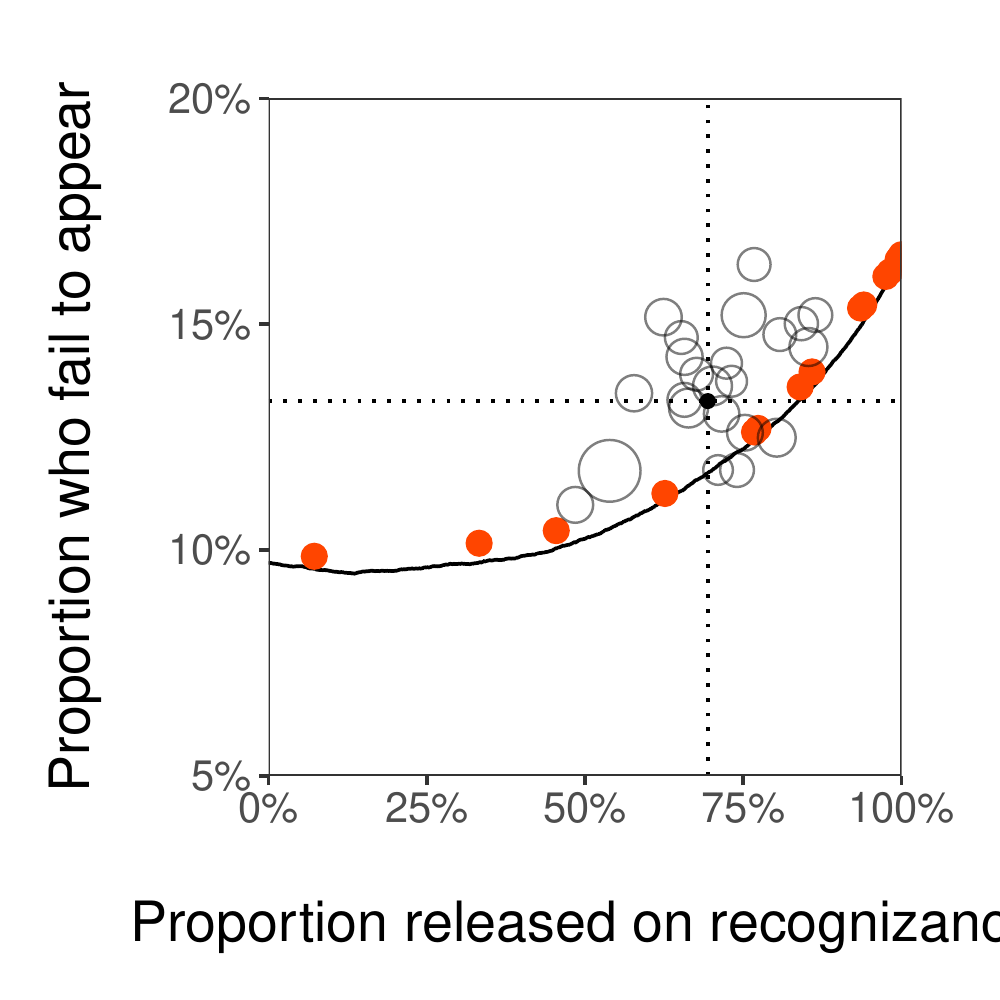}
  \caption{%
    Each point on the solid line corresponds to decision rules derived from
    a random forest risk model with varying thresholds for release.  The red
    points correspond to the simple risk score in Table~\ref{tab:heuristic} for
    all possible release thresholds.  The simple rules perform nearly
    identically to the complex models.  The open circles show the observed RoR
    and FTA rates for each judge in our data who presided over at least 1,000
    cases, sized in proportion to their case load.  In nearly every instance,
    the statistical decision rules outperform the human decision-maker.
  }
\label{fig:evaluation}
  \vspace{-5mm}
\end{figure}

Figure~\ref{fig:evaluation} shows estimated RoR and FTA rates for a variety of
pretrial release rules.  Points on the solid line correspond to rules
constructed via the random forest model described above for various decision
thresholds.  The red points correspond to rules based on the simple scoring
procedure in Table~\ref{tab:heuristic}, again corresponding to various decision
thresholds.  For each rule, the horizontal axis shows the estimated proportion
of defendants ROR'd under the rule, and the vertical axis shows the estimated
proportion of defendants who would fail to appear at their court dates.
The solid black dot shows the status quo: 69\% of defendants RoR'd and a 13\%
FTA rate. Finally, the open circles show the observed RoR and FTA rates for each
of the 23 judges in our data who have presided over at least 1,000 cases, sized
in proportion to their case load.

The plot illustrates three key points. First, simple rules that consider only
two features---age and prior FTAs---perform nearly identically to a random
forest that incorporates 64 features.
Second, the statistically informed policies in the lower right quadrant all
achieve higher rates of RoR and, simultaneously, lower rates of FTA than the
status quo. In particular, by releasing defendants if and only if their risk
score is below 10.5, we expect to release 84\% of defendants while achieving an
FTA rate of 14\%. Relative to the existing policy, following this rule
would not appreciably increase the overall FTA rate---it would increase just 0.3 percentage points, from 13.3\% to 13.6\%---but
only half as many defendants would be required to pay bail.
Finally, for nearly every judge, there is a statistical
decision rule that simultaneously yields both a higher rate of release and a
lower rate of FTA than the judge currently achieves. The statistical decision
rules consistently outperform the human decision-makers.

Why do these statistical decision rules outperform the experts?
Figure~\ref{fig:policy-grid} sheds light on this phenomenon. Each cell in
the plot corresponds to defendants binned by their age and prior number of
FTAs. Under a rule that releases defendants if and only if their risk score is
below 10.5, one would release everyone to the left of the solid black line, and
set bail for everyone to the right of the line. The number in each cell shows
the proportion of defendants in each bin who are currently released, and the
cell shading graphically indicates this proportion. Aside from the lowest risk
defendants, who have no prior FTAs, the likelihood of being released does not
correlate strongly with estimated flight risk. For example, the high risk group
of young defendants with four or more prior FTAs is released at about the same
50\% rate as the low risk group of older defendants with one prior FTA\@.
This low correlation between flight risk and release decision is in part
attributable to extreme differences in release rates across judges, with some
releasing more than 90\% of defendants and others releasing just 50\%.\footnote{Defendants are not perfectly randomly assigned to judges for arraignment, but
in practice judges see a similar distribution of defendants.}
Whereas defendants experience dramatically different outcomes based on the
judge they happened to appear in front of, statistical decision rules improve
efficiency in part by ensuring consistency.

\subsection{Sensitivity to unobserved heterogeneity}
\label{ssec:sensitivity}

As noted above, our estimation strategy assumes that
the judicial action taken is ignorable given the observed covariates.
Under this ignorability assumption, one can accurately estimate the potential outcomes.
Judges, however, might base their decisions in part on information
that is not recorded in the data, which could in turn bias our estimates.
For example, a judge, upon meeting a defendant, might surmise that his flight
risk is higher than one would expect based on the recorded covariates alone, and
may accordingly require the defendant to post bail.
In this case, since our estimates are based only on the recorded data, we may
underestimate the defendant's counterfactual likelihood of failing to appear if
released.

We take two approaches to gauge the robustness of our results to such hidden
heterogeneity.
First, on each subset of cases handled by a single judge, we use response
surface modeling to estimate $V^\pi$. Each judge has idiosyncratic criteria for
releasing defendants, as evidenced by the dramatically different release rates
across judges; accordingly, the types and proportion of cases for which the policy
$\pi$ coincides with the observed action differ from judge to judge.
This variation allows us to assess the sensitivity of our estimates to the
observed actions $\{a_i\}$.
In particular, if unobserved heterogeneity were significant, we would expect our estimates to systematically vary depending on the proportion
of observed judicial actions that agree with the policy $\pi$.
Figure~\ref{fig:sensitivity-judge} shows the results of this analysis for the simple
decision rule described in Figure~\ref{fig:policy-grid}, where each point
corresponds to a judge.
We find that the FTA rate of the decision rule is consistently estimated to be
approximately 12--14\%. Moreover, some judges act in concordance with the
decision rule in nearly 80\% of cases;
for this subset of judges, where our estimates are largely based on directly
observed outcomes, we again find FTA is estimated at around 12--14\%.

\begin{figure}
  \includegraphics[width=6cm]{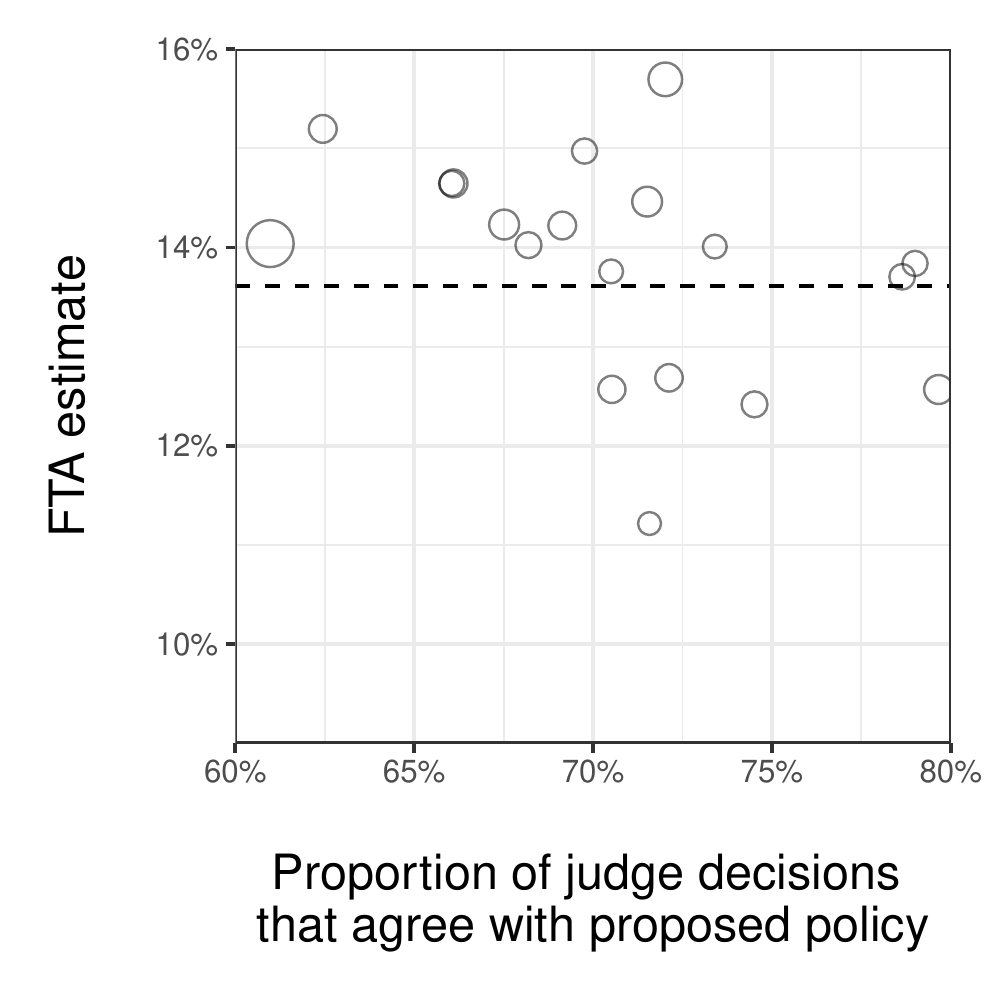}
  \caption{%
    For the simple decision rule illustrated in Figure~\ref{fig:policy-grid},
    FTA rate is estimated by separately applying response surface modeling to each
    judge's cases, where each point corresponds to a judge; the dashed
    horizontal line indicates the FTA rate of the decision rule estimated on
    the full set of cases.
    Though judges have different criteria for releasing defendants---and the corresponding response models may thus differ---the FTA rate of the decision rule is consistently estimated to be
    approximately 12--14\%.
  }
\label{fig:sensitivity-judge}
  \vspace{-3mm}
\end{figure}

As a second robustness check, we adapt the method of \citet{rosenbaum_1983a}
for assessing the sensitivity of estimated causal effects to an unobserved
binary covariate.
We specifically tailor their approach to offline policy evaluation.
At a high level, we assume there is an unobserved covariate $u \in \{0,1\}$ that
affects both a judge's decision (RoR or bail) and also the outcome conditional
on that action.
For example, $u$ might indicate that a defendant is sympathetic, and sympathetic
defendants may be more likely to be RoR'd and also more likely to appear at
their court proceedings. Our key assumption is that a judge's action is
ignorable given the observed covariates $x$ and the unobserved covariate $u$:
\begin{equation}
\label{eq:ignorability}
  (r(\text{RoR}), r(\text{bail})) \independent a \ \big | \, x,u.
\end{equation}
There are four key parameters in this framework:
(1) the probability that $u = 1$;
(2) the effect of $u$ on the judge's decision;
(3) the effect of $u$ on the defendant's likelihood of FTA if RoR'd;
and (4) the effect of $u$ on the defendant's likelihood of FTA if bail is set.
Our goal is to quantify the extent to which our estimate of $V^{\pi}$
changes as a function of these parameters.

Without loss of generality, we can write
\begin{equation}
\label{eq:selection}
  \Pr(a = \text{RoR} | u, x) = \text{logit}^{-1}\left(\gamma_x + u \alpha_x \right)
\end{equation}
for appropriately chosen parameters $\gamma_x$ and $\alpha_x$ that depend on the
observed covariates $x$.
We note that randomness in judicial decisions may arise from a multitude of factors, including
idiosyncrasies in how judges are assigned to cases.
Here $\alpha_x$ is the change in log-odds of being RoR'd when $u=0$ versus when
$u=1$. For $t \in \{\text{RoR}, \text{bail}\}$, we can similarly write
\begin{equation}
\label{eq:effect}
  \Pr(r(t) | u, x) = \text{logit}^{-1}\left(\beta_x^t + u \delta_x^t \right)
\end{equation}
for parameters $\beta_x^t$ and $\delta_x^t$.
In this case, $\delta_x^{\text{RoR}}$ is the change in log-odds of failing to
appear if RoR'd when $u=0$ versus when $u=1$,
and $\delta_x^{\text{bail}}$ is the corresponding change if bail is set.

Now, for any posited values of $\Pr(u=1|x)$, $\alpha_x$, $\delta_x^{\text{RoR}}$
and $\delta_x^{\text{bail}}$, we use the observed data to estimate
$\gamma_x$, $\beta_x^{\text{RoR}}$ and $\beta_x^{\text{bail}}$.
We do this in three steps.
By~\eqref{eq:selection},
\begin{align*}
  & \Pr(a = \text{RoR} | x) =
    \Pr(u = 0|x) \cdot \text{logit}^{-1}(\gamma_x) + \\
  & \hspace{5mm}  \Pr(u=1|x) \cdot \text{logit}^{-1}(\gamma_x + \alpha_x).
\end{align*}
The left-hand side of the equation can be estimated with a regression model fit
to the data.
For fixed values of $\Pr(u=1|x)$ and $\alpha_x$, the right-hand side is an
increasing function of $\gamma_x$ that takes on values from 0 to 1 as $\gamma_x$
goes from $-\infty$ to $+\infty$.
There is thus a unique value $\hat{\gamma}_x$ such that the right-hand side
equals $\hat{\Pr}(a = \text{RoR} | x)$.
\citet{rosenbaum_1983a} derive a simple closed form
solution for $\hat{\gamma}_x$, facilitating fast computation on large datasets, which we omit for space.

Second, we use the fitted values of $\gamma_x$ to estimate the distribution of
$u$ given the observed covariates and judicial action. By Bayes' rule,
\begin{align*}
  & \Pr(u=1|a=t,x) = \frac{\Pr(a=t | u=1, x) \Pr(u=1|x)}{\Pr(a=t|x)} \\
  & = \frac{\Pr(a=t | u=1, x) \Pr(u=1|x)}
           {\Pr(a=t|u=1,x) \Pr(u=1|x) + \Pr(a=t| u=0,x) \Pr(u=0|x)}.
\end{align*}
With $\hat{\gamma}_x$, the $\Pr(a =t| u, x)$ terms on the right-hand side can be
estimated from~\eqref{eq:selection}, and we can thus approximate the left-hand
side.

Third, we have
\begin{align*}
  & \hspace{-3mm} \Pr(r(t) = 1 | a=t, x)  \\
  & = \Pr(u = 0|a=t, x)  \Pr(r(t) = 1 | a=t, x, u=0)   \\
  & \hspace{3mm} + \Pr(u = 1|a=t, x)  \Pr(r(t) = 1 | a=t, x, u=1)  \\
  & = \Pr(u = 0|a=t, x)  \Pr(r(t) = 1 | x, u=0)   \\
  & \hspace{3mm} + \Pr(u = 1|a=t, x)  \Pr(r(t) = 1 | x, u=1)  \\
  & = \Pr(u = 0|a=t, x)  \cdot \text{logit}^{-1}\left(\beta_x^t\right)   \\
  & \hspace{3mm} + \Pr(u = 1|a=t, x) \cdot \text{logit}^{-1}\left(\beta_x^t +
    \delta_x^t \right).
\end{align*}
The second equality above follows from the ignorability assumption stated
in~\eqref{eq:ignorability}, and the third equality follows
from~\eqref{eq:effect}.
The left-hand side can be approximated by the quantity $\hat{r}_x(t)$
that we obtain via response surface modeling.
Importantly, $\hat{r}_x(t)$ is a reasonable estimate of $\Pr(r(t) = 1 | a=t, x)$
even though it may not be a good estimate of $r_x(t)$.
This distinction is indeed the rationale of our sensitivity analysis.
Given our above estimate of $\Pr(u=1|a=t,x)$ and our assumed value of
$\delta_x^t$, the only unknown on the right-hand side is $\beta_x^t$. As before,
there is a unique value $\hat{\beta}_x^t$ that satisfies the constraint.

With $\hat{\beta}_x^t$ in hand, we can now approximate the potential outcome for
the action \emph{not} taken:
\begin{equation*}
  \Pr(r(\bar{t})= 1 | a=t, x)
\end{equation*}
where $\bar{t} = \text{RoR}$ if $t = \text{bail}$, and vice versa.
Specifically, we have
\begin{align}
  & \hat{\Pr}(r(\bar{t})= 1 | a=t, x) = \hat{\Pr}(u = 0|a=t, x)  \cdot
    \text{logit}^{-1}\left(\hat{\beta}_x^{\bar{t}}\right) + \nonumber \\
  & \hspace{5mm} \hat{\Pr}(u = 1|a=t, x) \cdot
    \text{logit}^{-1}\left(\hat{\beta}_x^{\bar{t}} +
    \delta_x^{\bar{t}} \right).
  \label{eq:counterfactual-estimate}
\end{align}
Finally, the Rosenbaum and Rubin estimator adapted to policy evaluation is
\begin{equation*}
  \hat{V}^{\pi}_{\text{RR}} = \frac{1}{|\Omega|} \sum_i \big [
    r_i \mathbf{I}(\pi(x_i) = a_i) + \hat{r}_i (\bar{a_i})
    \mathbf{I}(\pi(x_i) \neq a_i)
  \big ],
\end{equation*}
where $\hat{r}_i(\bar{a_i}) = \hat{\Pr}(r(\bar{a_i})= 1 | a_i, x_i)$ is computed
via~\eqref{eq:counterfactual-estimate}.

\begin{figure}
  \includegraphics[width=4cm]{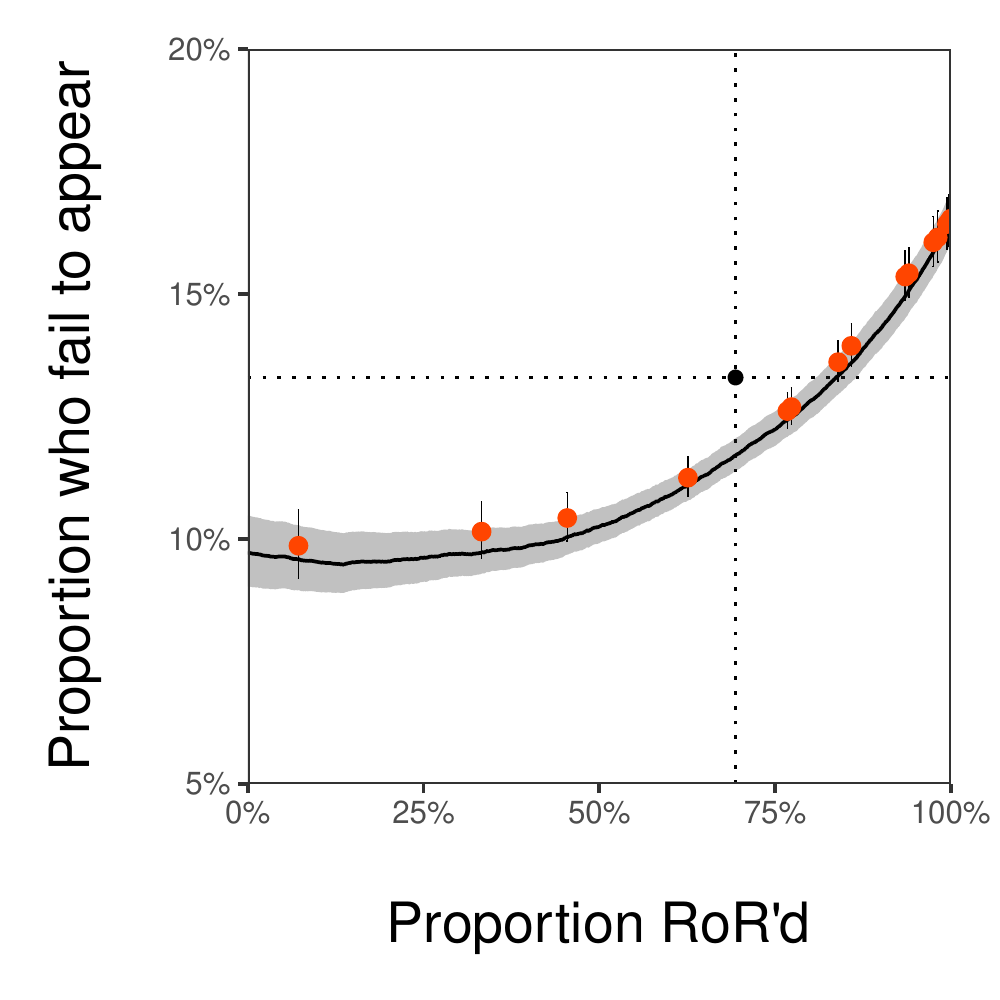}
  \includegraphics[width=4cm]{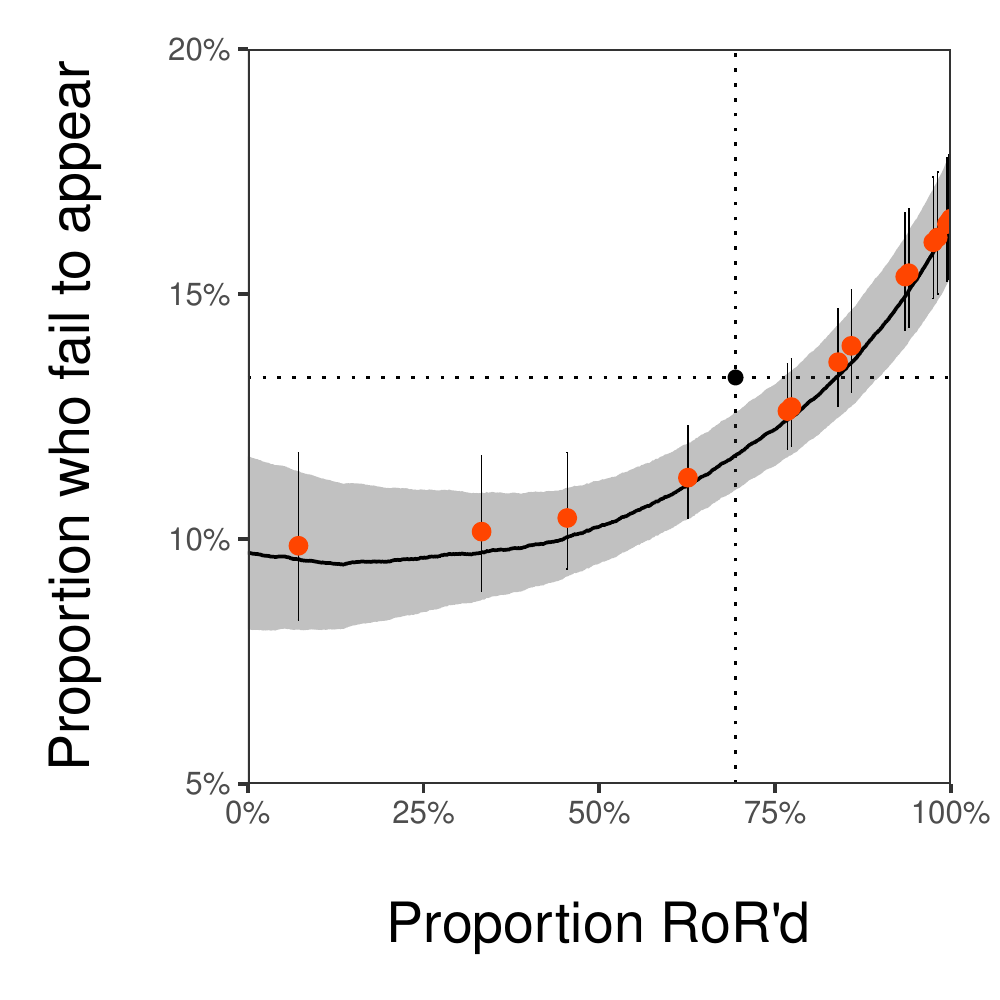}
   \caption{%
    The grey band (for the complex rules) and the error bars (for the simple
    rules) indicate minimum and maximum FTA estimates for a variety of parameter
    settings.  In the left-hand plot, we assume $\alpha = \log 2$ and consider
    all combinations of $p(u=1) \in \{0.1, 0.2, \dots, 0.9\}$,
    $\delta^{\text{RoR}} \in \{-\log{2}, 0, \log 2\}$, and $\delta^{\text{bail}}
    \in \{-\log{2}, 0, \log 2\}$, where all parameters are constant independent
    of $x$.
    In the right-hand plot, we consider a more extreme situation, with $\alpha
    = \log 3$, $\delta^{\text{RoR}} \in \{-\log 3, 0, \log 3\}$, and
    $\delta^{\text{bail}} \in \{-\log 3, 0, \log 3\}$.  The results are relatively
    stable in these parameter regimes.
  }
\label{fig:sensitivity}
  \vspace{-4mm}
\end{figure}

Figure~\ref{fig:sensitivity} shows the results of computing
$\hat{V}^{\pi}_{\text{RR}}$ on our data in two parameter regimes.
In the first (left-hand plot), we
assume $\alpha = \log 2$ and consider all combinations of
$p(u=1) \in \{0.1, 0.2, \dots, 0.9\}$, $\delta^{\text{RoR}} \in \{-\log{2}, 0, \log 2\}$, and
$\delta^{\text{bail}} \in \{-\log{2}, 0, \log 2\}$.
All parameters are constant independent of $x$.
We thus assume that holding the observed covariates fixed,
a defendant with $u=1$ has twice the odds of being RoR'd as one with $u=0$,
and that $u$ can double or half the odds a defendant fails to appear.
For each complex policy (\emph{i.e.}, one based on a random forest),
the grey band shows the minimum and maximum value of $\hat{V}^{\pi}_{\text{RR}}$
across all parameters in this set; the error bars on the red points show the
analogous quantity for the simple rules.  In the right-hand plot, we consider
a more extreme situation, with $\alpha = \log 3$, $\delta^{\text{RoR}} \in
\{-\log 3, 0, \log 3\}$, and $\delta^{\text{bail}} \in \{-\log 3, 0, \log 3\}$.
We find that our estimates are relatively stable in these parameter regimes.
In the first case ($\alpha = \log 2$) the estimated FTA rate for a given policy
typically varies by only half a percentage point.  Even in the more extreme
setting ($\alpha = \log 3$), policies are typically stable to about one
percentage point.  It thus seems our conclusions are robust to unobserved
heterogeneity across defendants.

\section{Select-regress-and-round: A simple method for creating simple rules}

We now introduce and evaluate a simple method---\emph{%
select-regress-and-round}---that formalizes
and generalizes the rule construction procedure we applied for pretrial release decisions.
In particular, we dispense with ad hoc feature selection and adopt a standard statistical
routine.

\subsection{Rule construction}
The rules we construct are designed to aid classification or ranking decisions
by assigning each item in consideration a score $z$, computed as a linear
combination of a subset $S$ of the item features:
\begin{equation*}
  z = \sum_{j \in S} w_j x_j, \label{eq:heuristic_score}
\end{equation*}
where the weights $w_j$ are integers. In the cases we consider, the features
themselves are typically 0\--1 indicator variables (indicating, for example,
whether a person is male, or whether an individual is 26--30 years old), and so
the rule reduces to a weighted checklist, in which one simply sums up the
(integer) weights of the applicable attributes. Often, one seeks to make binary
decisions (\emph{e.g.}, whether to detain or to release an individual), which amounts to
setting a threshold and then taking a particular course of action if and only if
the score is above that threshold.

This class of rules has two natural dimensions of complexity:
the number of features and the magnitude of the weights.
Given integers $k \geq 1$ and $M \geq 1$, we apply the following three-step
procedure to construct rules with at most $k$ features and integer weights
bounded by $M$ (\emph{i.e.}, $|S| \leq k$ and $-M \leq w_j \leq M$).

\begin{enumerate}[align=left,leftmargin=*]
  \item \textbf{Select.}
    From the full set of features, select $k$ features via forward stepwise
    regression.
      For fixed $k$, we note that standard selection metrics (\emph{e.g.}, AIC or BIC)
    are theoretically guaranteed to yield the same set of features.

  \item \textbf{Regress.}
    Using only these $k$ selected features, train an $L^1$-regularized (lasso)
    logistic regression model to the data, which yields (real-valued) fitted
    coefficients $\beta_1, \dots, \beta_k$.

  \item \textbf{Round.}
    Rescale the coefficients to be in the range $[-M, M]$, and then
    round the rescaled coefficients to the nearest integer. Specifically, set
    \begin{equation*}
      w_j = \text{Round}\left ( \frac{M \beta_j}{\max_i \vert \beta_i \vert} \right ).
    \end{equation*}
\end{enumerate}
We note that rules constructed in this way may have fewer than $k$ features, since the
  lasso regression in Step 2 may result in coefficients that are identically
  zero, and rescaling and rounding coefficients in Step 3 may zero-out
additional terms.\footnote{%
We select features in Step 1 with the \texttt{R} package \texttt{leaps}.
The models in Step 2 are fit with the \texttt{R} package \texttt{glmnet}. The \texttt{cv.glmnet} method is used to determine the best value of the regularization parameter $\lambda$ with 10-fold cross-validation and 1,000 values of $\lambda$.
}
This select-regress-and-round strategy for rule construction builds upon findings that  ``improper'' weighting schemes for linear models (e.g, unit weighting) lead to accurate predictions~\citep{guilford1942fundamental,dawes1979robust, gigerenzer1996reasoning,goel_2016c};
in particular, our strategy incorporates feature selection and more general integer weights to generate
a richer family of simple rules.
We next examine the accuracy of these rules.

\subsection{Rule evaluation}

We apply the select-regress-and-round procedure to 22 publicly available
datasets to examine the tradeoff between complexity and performance.
These datasets all come from the UCI ML repository,
and were selected according to four criteria:
(1) the dataset involves a binary classification (as opposed to a regression)
problem;\footnote{%
  For those datasets whose outcome variable takes more than two values, we set
  the majority class as the target variable, so that all the tasks we consider
involve binary classification.}
(2) the dataset is provided in a standard and complete form;
(3) the dataset involves more than 10 features;
and (4) the classification problem is one that a human could plausibly learn to
solve with the given features.
For example, we included a dataset in which the task was to determine whether
cells were malignant or benign based on various biological attributes of the
cells, but we excluded image recognition tasks in which the features were
represented as pixel values.  This fourth requirement limits the scope of our
analysis and conclusions to domains in which human decision makers typically act
without the aid of a computer.\footnote{The 22 UCI datasets we consider are:
adult, annealing, audiology-std, bank, bankruptcy, car, chess-krvk, chess-krvkp,
congress-voting, contrac, credit-approval, ctg, cylinder-bands, dermatology,
german\_credit, heart-cleveland, ilpd, mammo, mushroom, aus\_credit, wine, and
wine\_qual.
}

Unlike the judicial decisions discussed in Section~\ref{sec:bail},
outcomes in the domains we consider here are unaffected by a decision maker's actions.
For example, assessing the likelihood a cell is malignant---and then acting on that knowledge---does not change the fact that the cell was either malignant or not at the time of the measurement.
In contrast, a judge's decision to release or detain an individual necessarily alters the defendant's likelihood of appearing at trial.
Further, in the UCI domains, we observe outcomes for every example, not only a subset in which a decision maker chose to act.
Decision rules are constructed similarly in both the UCI and bail datasets.
Evaluating the resulting rules, however, is significantly easier for the UCI datasets:
since outcomes are independent of actions and are observed for all examples,
one need not consider subtle issues of causal inference.

\begin{figure}[t]
  \includegraphics[width=6cm]{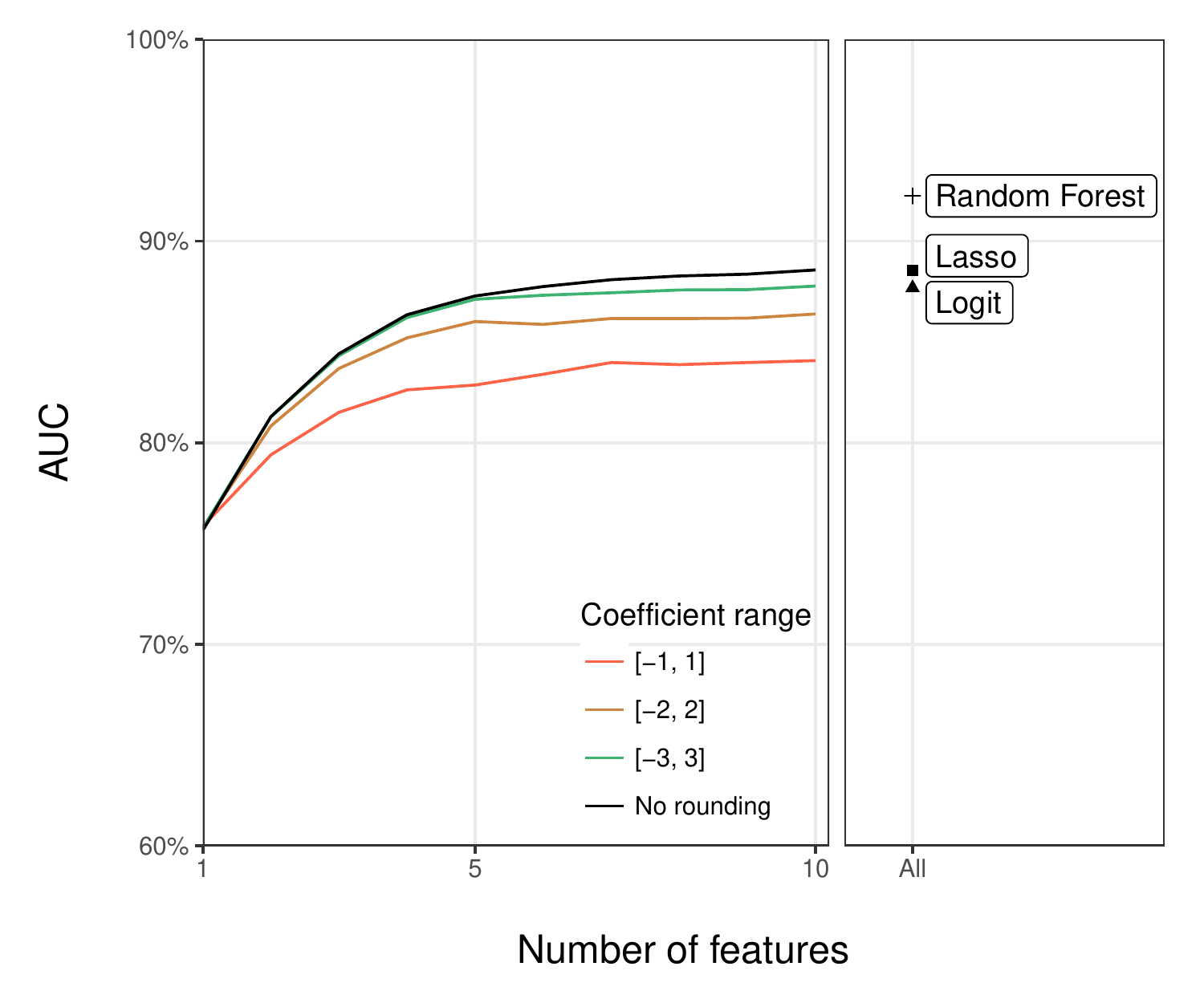}
  \caption{%
    Mean test AUC of decision rules over 22 datasets.
    The simple rules use up to 10 features,
    with integer coefficients in the specified ranges.
    The black line shows performance of lasso with feature selection but
    without rounding the coefficients.
    ``All'' features --- used by random forest, lasso, and logistic
    regression --- varies by domain, with an average of 38.
  }
\label{fig:auc_no_noise}
  \vspace{-5mm}
\end{figure}

On each of the 22 datasets we analyze here, we construct simple rules for
a range of the number of features $k \in \{1, \dots, 10\}$ and the magnitude of
the weights $M \in \{1, 2, 3\}$. We benchmark the performance of these rules
against three standard statistical models: logistic regression,
$L^1$-regularized logistic regression, and random forest.  These models were fit
in \texttt{R} with the \texttt{glm},  \texttt{glmnet}, and \texttt{randomForest}
packages, respectively.  For the $L^1$-regularized logistic regression models,
the \texttt{cv.glmnet} method was used to determine the best value of the
regularization parameter $\lambda$ with 10-fold cross-validation and 1,000
values of $\lambda$.
We used 1,000 trees for the random forest models.
This head-to-head comparison is a difficult test for the simple rules in part
because they can only base their predictions on 1 to 10 features.
The complex models, in contrast, can train and predict with all features,
which number between 11 and 93 with a mean of 38.

Figure~\ref{fig:auc_no_noise} shows model performance---measured in terms of
mean AUC across the 22 datasets---as a function of model size and coefficient
range.
The AUC for each model on each dataset is computed via 10-fold cross-validation.
We find that simple rules with only five features and integer coefficients
between -3 and 3 perform on par with logistic regression and $L^1$-regularized
logistic regression trained on the full set of features.
For 1 to 10 features, the [-3, 3] model (green line) differs from the
unrounded lasso model (black line) by less than 1 percentage point.
The performance of the random forest model is somewhat better:
trained on all features, random forest achieves mean AUC of 92\%;
the mean AUC is 87\% for simple rules with at most five features and integer
coefficients between -3 and 3.
Complex prediction methods certainly have their advantages, but the gap in
performance between simple rules and fully optimized prediction methods
is not as large as one might have thought.

\subsection{Benchmarking to integer programming}
The simple rules we construct take the form of a linear scoring rule with
integer weights.
To produce such rules, mixed-integer programming is a natural alternative to
our select-regress-and-round strategy, and supersparse linear integer models
(SLIM)~\cite{ustun_2016} is the leading instantiation of that approach.
Given constraints on the number of features and the magnitude of the integer
weights, SLIM produces rules that optimize for binary classification accuracy
(\emph{i.e.}, 0\--1 loss).

\begin{figure}[t]
  \includegraphics[width=4cm]{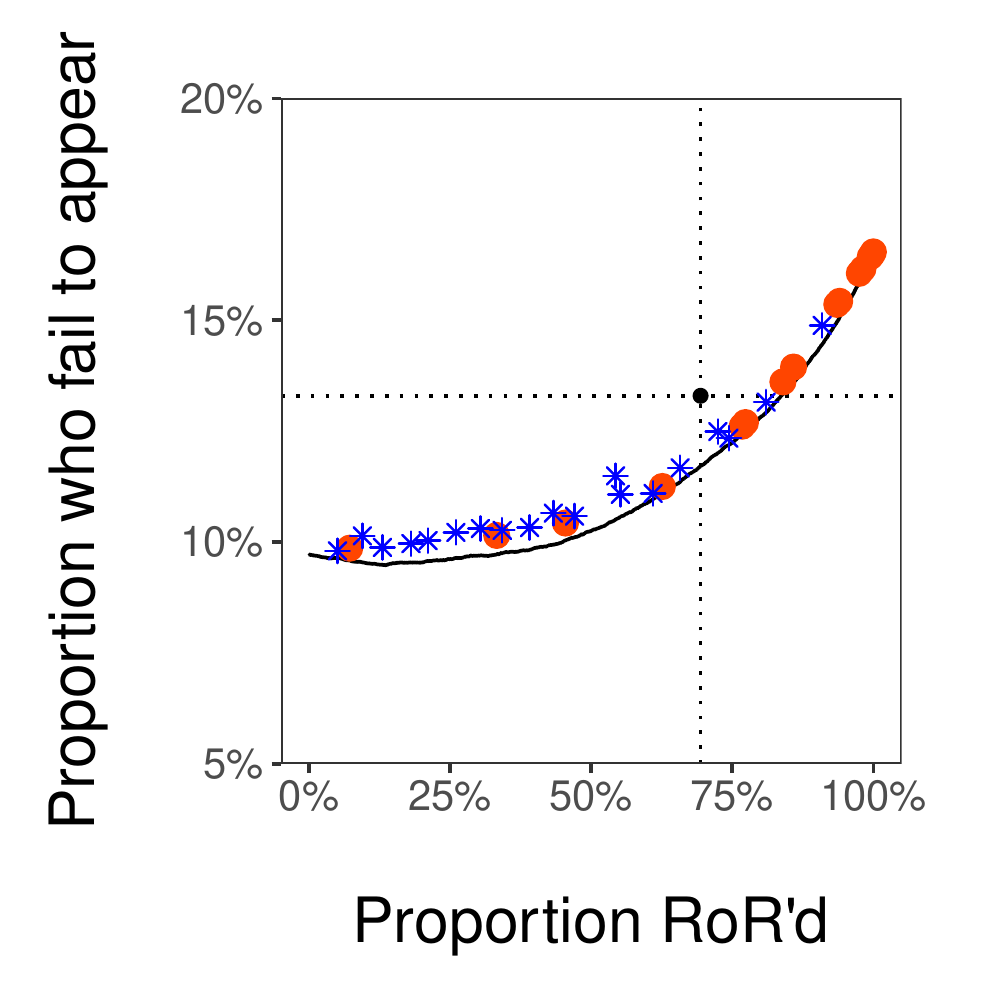}
  \includegraphics[width=4cm]{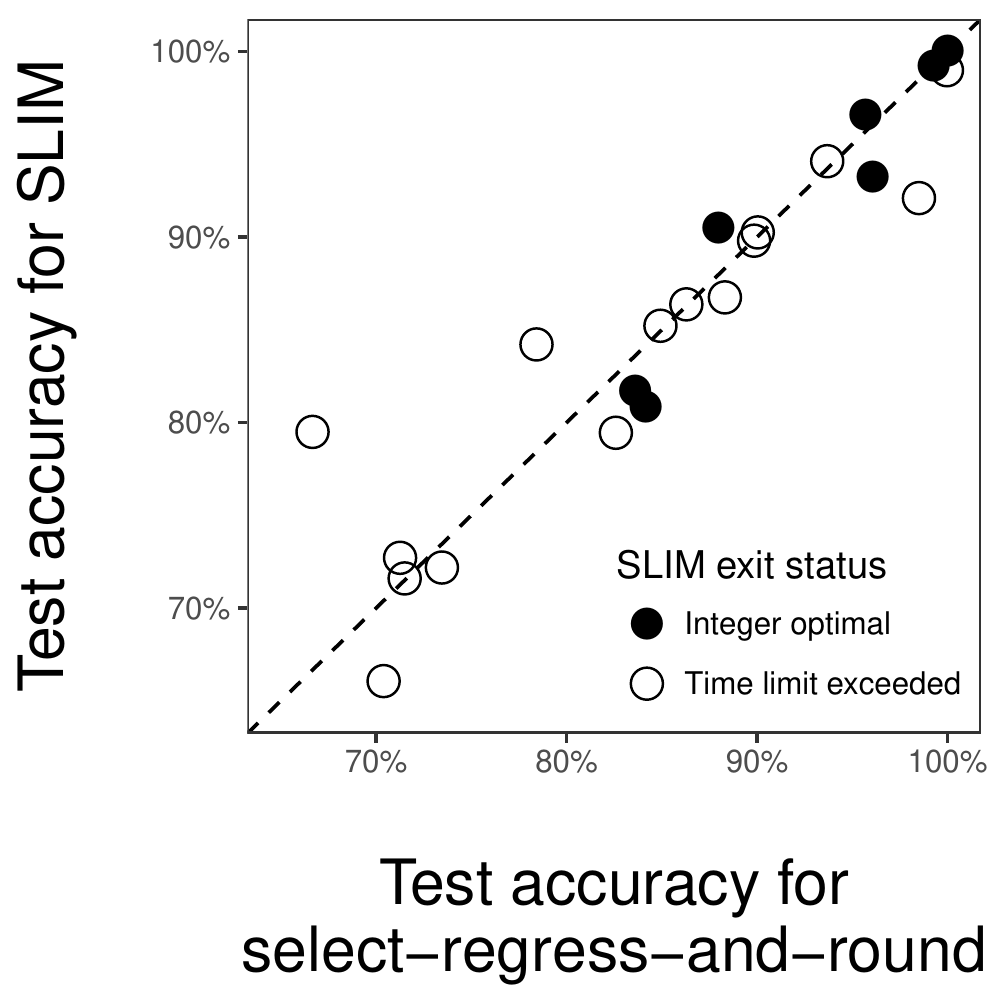}
   \caption{%
     Left panel: comparison of rules for pretrial release decisions
     produced by select-regress-and-round (red), SLIM (blue), and
     random forest (black line).
     Right panel: binary classification accuracy for
     select-regress-and-round and SLIM on 22 UCI datasets.
  }
\label{fig:compare_to_slim}
  \vspace{-6mm}
\end{figure}

We compare SLIM to select-regress-and-round on the judicial decision-making
problem and on the 22 UCI datasets.
Figure~\ref{fig:compare_to_slim} (left panel) shows estimated FTA and release
rates for the random forest model (black line), our simple rules derived in
Section~\ref{sec:bail} (red points), and the simple rules produced by SLIM (blue
points).
As with our own simple rules, we constrain SLIM to produce rules based on age
and number of past FTAs, with integer weights ranging from -10 to 10.
As before, decision rules are constructed from
the random forest and select-regress-and-round risk scores by
varying the decision threshold;
in contrast, multiple rules for SLIM are computed by varying a parameter that specifies the maximum acceptable false positive rate~\cite{ustun_2016}.
Both methods for producing simple rules perform nearly the same
as the random forest model trained on the full set of 64 features.

We next consider the 22 UCI datasets.
SLIM is known to work best when the features are
discrete~\cite{zeng2016interpretable}.
We thus pre-process the datasets by discretizing all continuous features into
three bins containing an approximately equal number of examples, representing
low, medium, and high values of the feature.
Integer programming is an NP-hard problem, and so following
\citet{ustun_2016} we set a time limit for SLIM;\@ they set a 10-minute limit,
but we allow up to 6 hours of computation per model.
For 7 of the 22 datasets, SLIM found an integer-optimal solution within
the time limit, and returned approximate solutions in the remaining 15 cases.
Figure~\ref{fig:compare_to_slim} (right panel)
compares binary classification accuracy of SLIM and select-regress-and-round on
the 22 UCI datasets, where each point corresponds to a dataset.
Both methods are constrained to produce rules with at most five features and
integer coefficients between -3 and 3.
We show 0\--1 accuracy since SLIM optimizes for this metric, but similar results
hold for AUC;\@ accuracy is computed out-of-sample via 10-fold
cross-validation.
Both methods for producing simple rules yield comparable results.
Averaged across all 22 datasets, SLIM and select-regress-and-round both achieve mean
accuracy of 86\%.
Even in the 7 cases where SLIM found integer-optimal solutions,
performance is nearly identical to our simple select-regress-and-round strategy.

In terms of classification accuracy, select-regress-and-round generates rules on
par with those obtained by solving mixed-integer programs.
We note, however, two advantages of our approach.
First, whereas select-regress-and-round yields results almost instantaneously,
integer programs can be computationally expensive to solve.
Second, our approach is both conceptually and technically simple,
requiring little statistical or computational expertise,
and accordingly easing adoption for practitioners.

\section{The robustness of classification}

Why is it that simple rules often perform as well as the most sophisticated
statistical methods? In part it is because binary classification is robust to
error in the underlying predictive model, an observation that we formalize in
Theorem~\ref{theorem:auc} below.

To establish this result, we start by considering the prediction scores
generated via a standard statistical method---such as logistic regression
trained on the full set of available features---which we call the ``true''
scores. As in linear discriminant analysis, we assume that the true scores for positive and negative instances are
normally distributed with equal variance:
$\text{N}(\mu_p, \sigma^2)$ and $\text{N}(\mu_n, \sigma^2)$, respectively.
The homoscedasticity assumption guarantees the Bayes optimal classifier is a threshold rule on the scores.
For scores estimated via logistic regression, the normality assumption is reasonable if we consider the
scores on the logit scale rather than on the probability scale.
Figure~\ref{fig:empirical_gamma_app} (left panel) shows such scores for one of the UCI
datasets.
We further assume that the process of generating simple rules---both limiting
the number of features and also restricting the possible values of the
weights---can be viewed as adding normal, mean-zero noise $\text{N}(0,
\sigma_{\epsilon}^2)$ to the true scores; Figure~\ref{fig:empirical_gamma_app} (center panel) plots the distribution of this noise for one of the datasets.\footnote{%
\label{fn:noise}
 We estimate the noise distribution by
  taking the difference between the simple and true scores.
  Before taking the difference, we convert the simple scores to the scale of
  true scores by dividing the simple scores by $M$, the scaling factor used when generating the rule.
  }
Thus, with simple rules, instead of making classification decisions based on the
true scores, we assume decisions are made in terms of a noisy approximation.
Under this analytic framework, Theorem~\ref{theorem:auc} shows that the drop in
classification performance (as measured by AUC) can be expressed in terms of the
``true AUC'' (\emph{i.e.}, the AUC under the true scores) and $\gamma
= \sigma_{\epsilon}^2/\sigma^2$, the ratio of the noise to the
within-class variance of the true scores.
In particular, we find that when the magnitude of the noise is on par with (or
smaller than) the score variance (\emph{i.e.}, $\gamma \lesssim 1$),
then the AUC of the noisy approximation is comparable to the true AUC.\@

\begin{theorem}
For a binary classification task, let $Y$ be a continuous random variable that
denotes the prediction score of a random instance, and let $Y_p$ and $Y_n$
denote the conditional distributions of $Y$ for positive and negative instances,
respectively.
Suppose $Y_p \sim \text{N}(\mu_p, \sigma^2)$ and $Y_n \sim \text{N}(\mu_n,
\sigma^2)$.
Then, for $\epsilon \sim \text{N}(0, \sigma_{\epsilon}^2)$ and $\hat{Y} = Y + \epsilon$,
\begin{align}
\label{eq:auc}
    \mathrm{AUC}_{\hat{Y}} & =
    \Phi \left( \frac{\Phi^{-1}(\mathrm{AUC}_Y)}{\sqrt{1+\gamma}}\right),
\end{align}
where $\gamma = \sigma_{\epsilon}^2/\sigma^2$, and $\Phi$ is the CDF for
the standard normal.
\label{theorem:auc}
\end{theorem}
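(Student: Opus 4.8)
The plan is to reduce the statement to the elementary fact that the AUC of a scoring rule equals the probability that a random positive instance is scored above a random negative instance, and then to exploit that a difference of independent normals is normal. All the work is in bookkeeping the means and variances; there is essentially no analytic difficulty.

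First I would invoke the probabilistic characterization of AUC: if $S_p$ and $S_n$ are drawn independently from the positive-class and negative-class score distributions, then $\mathrm{AUC} = \Pr(S_p > S_n)$. Since $Y$ is continuous by hypothesis, ties occur with probability zero, so this identity holds for the true scores; and because $\hat Y = Y + \epsilon$ with $\epsilon$ continuous and independent of $Y$, the same identity applies to $\hat Y$. (If one wants to be self-contained, this can be derived by writing the AUC as $\int_0^1 \mathrm{TPR}(\mathrm{FPR}^{-1}(u))\,du$ and changing variables, but citing the standard ROC identity suffices here.)

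Next I would carry out the Gaussian computation. For the true scores, $S_p - S_n \sim \mathrm{N}(\mu_p - \mu_n,\ 2\sigma^2)$, so
\[
 \mathrm{AUC}_Y \;=\; \Pr(S_p - S_n > 0) \;=\; \Phi\!\left(\frac{\mu_p - \mu_n}{\sqrt{2}\,\sigma}\right),
\]
hence $\Phi^{-1}(\mathrm{AUC}_Y) = (\mu_p - \mu_n)/(\sqrt{2}\,\sigma)$. For the noisy scores, since $\epsilon$ is independent of $Y$ and its distribution does not depend on the class label, the class means are unchanged and each within-class variance increases by $\sigma_\epsilon^2$: $\hat Y_p \sim \mathrm{N}(\mu_p, \sigma^2 + \sigma_\epsilon^2)$ and $\hat Y_n \sim \mathrm{N}(\mu_n, \sigma^2 + \sigma_\epsilon^2)$. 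Applying the same difference-of-normals step to $\hat Y$ gives
\[
 \mathrm{AUC}_{\hat Y} \;=\; \Phi\!\left(\frac{\mu_p - \mu_n}{\sqrt{2(\sigma^2 + \sigma_\epsilon^2)}}\right) \;=\; \Phi\!\left(\frac{\mu_p - \mu_n}{\sqrt{2}\,\sigma\,\sqrt{1+\gamma}}\right) \;=\; \Phi\!\left(\frac{\Phi^{-1}(\mathrm{AUC}_Y)}{\sqrt{1+\gamma}}\right),
\]
which is exactly \eqref{eq:auc}.

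I do not expect a genuine obstacle. The only points needing care are the AUC-as-probability identity and the implicit independence of $\epsilon$ and $Y$ (which is precisely what the noise model in the text posits); once those are in place, the rest is a one-line manipulation of standard normals. The homoscedasticity assumption is what makes the argument clean — it guarantees the two class distributions of $\hat Y$ again share a common variance, so the single scalar $\gamma$ captures the entire degradation.
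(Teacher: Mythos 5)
Your proposal is correct and follows essentially the same route as the paper's proof: both invoke the identity $\mathrm{AUC} = \Pr(S_p > S_n)$, compute $\mathrm{AUC}_Y = \Phi\bigl((\mu_p - \mu_n)/(\sqrt{2}\sigma)\bigr)$ from the difference of independent normals, and then repeat the computation with the inflated variance $\sigma^2 + \sigma_\epsilon^2$ to obtain the stated formula. Your added remarks on the independence of $\epsilon$ and $Y$ and on why homoscedasticity keeps the argument clean are sensible but do not change the substance.
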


\begin{proof}
    In general, AUC is equal to the probability that a randomly selected positive instance has a
    higher prediction score than a randomly selected negative instance, and so
    $\text{AUC}_Y = \Pr(Y_p - Y_n > 0)$.
    Since $Y_{p} - Y_{n}$ is normally distributed with mean $\mu_{p} - \mu_{n}$
    and variance $2\sigma^2$,
    \[
        \frac{Y_{p} - Y_{n} - (\mu_{p} - \mu_{n})}
        {\sqrt{2}\sigma} \sim \text{N}(0,1).
    \]
    Hence,
    \begin{align*}
        \text{AUC}_{Y} &= \Pr\left(
          \frac{Y_{p} - Y_{n} - (\mu_{p} - \mu_{n})}
          {\sqrt{2}\sigma} >
          -\frac{\mu_p - \mu_n}{\sqrt{2}\sigma}\right) \\
        & = \Phi\left(\frac{\mu_p - \mu_n}
        {\sqrt{2}\sigma}\right),
  \end{align*}
   where the last equality follows from symmetry of the normal distribution.

   Now define $\hat{Y}_{p} = Y_{p} + \epsilon$, so $\hat{Y}_{p} \sim \text{N}
   (\mu_{p}, \sigma^2 + \sigma_{\epsilon}^2)$, with $\hat{Y}_{n}$ defined
   similarly. A short computation shows that
    \begin{align*}
        \text{AUC}_{\hat{Y}}
        & = \Pr(\hat{Y}_p  > \hat{Y}_n)
        = \Phi\left(\frac{\mu_p - \mu_n}{\sqrt{2\sigma^2
            + 2\sigma_{\epsilon}^2}}\right)
        = \Phi \left(
          \frac{\Phi^{-1}(\mathrm{AUC}_Y)}{\sqrt{1+\gamma}}\right).
    \end{align*}
\end{proof}

Theorem~\ref{theorem:auc} establishes a direct theoretical link between
performance and noise in model specification.
To give a better sense of how the analytic expression for
$\mathrm{AUC}_{\hat{Y}}$ varies with $\mathrm{AUC}_Y$ and $\gamma$,
Figure~\ref{fig:empirical_gamma_app} (right panel) shows this expression for various parameter
values.
For example, the figure shows that for $\text{AUC}_Y = 90\%$ and
$\gamma = 0.5$, we have $\text{AUC}_{\hat{Y}} = 85\%$.
That is, if the amount of noise is equal to half the within-class variance of the
true scores, then the drop in performance is relatively small.

While connecting
model performance to model noise, Theorem~\ref{theorem:auc} leaves unanswered how much
noise simple rules add to the underlying scores.
This question seems
difficult to answer theoretically.
We can, however, empirically estimate
how much noise simple rules add in the datasets we analyze.\footnote{%
To estimate $\gamma = \sigma_{\epsilon}^2/\sigma^2$ for
  a specific simple rule on a given dataset, we first compute the average within-class variance
  of the true scores, where these scores are generated via an $L^1$-regularized logistic
  regression model.
  We estimate $\sigma_{\epsilon}^2$ by taking the variance of
  the noise, as described in Footnote~\ref{fn:noise}.
}
Across the 22 UCI datasets we consider, we find that rules with five features and a coefficient range of -3 to 3 have an average value of $\gamma = 0.22$.
This low empirically observed noise is in line with our finding that such simple rules perform well on these datasets.

\begin{figure}
    \includegraphics[width=2.7cm]{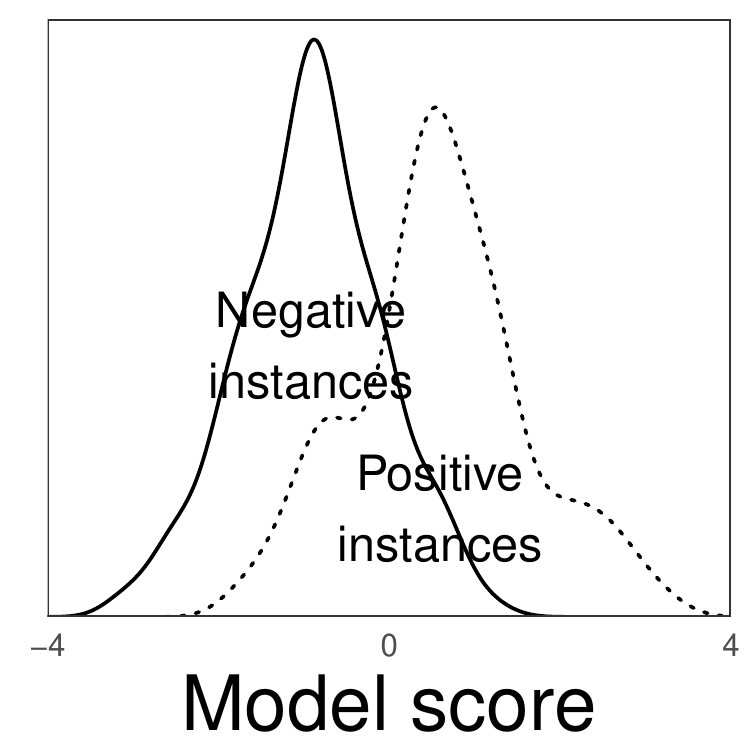}
    \includegraphics[width=2.7cm]{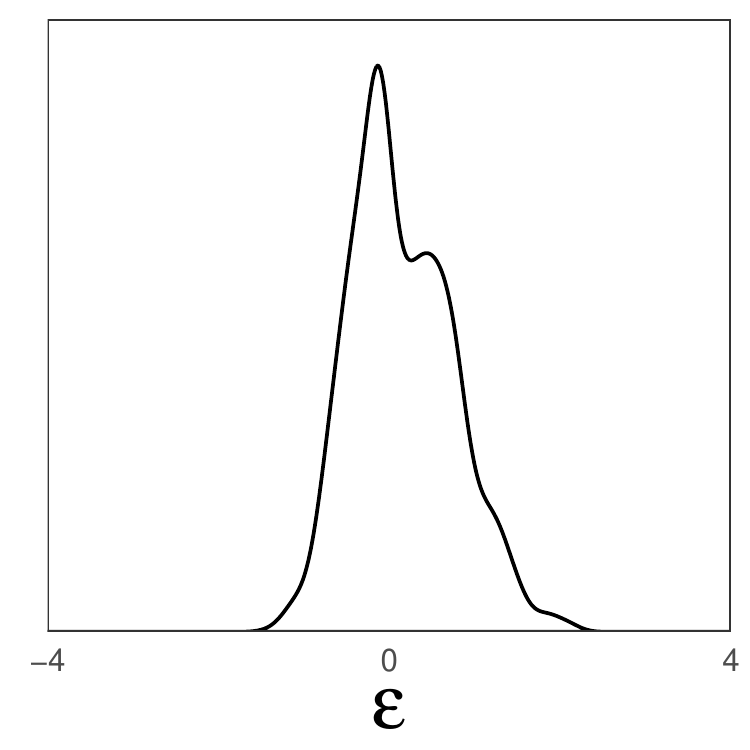}
    \includegraphics[width=2.7cm]{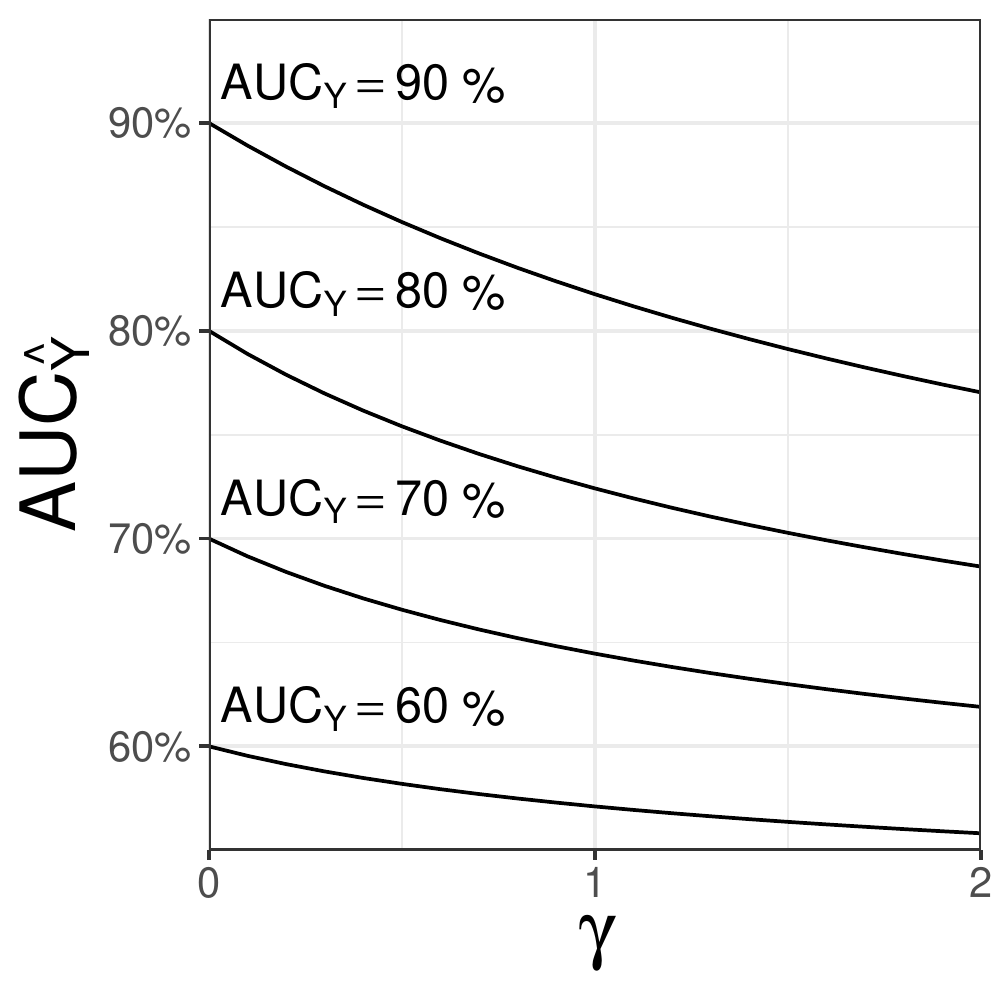}
 \caption{%
  Left panel: empirical distribution of prediction scores, on the logit scale,
    for positive and negative instances of a UCI dataset
    (\texttt{heart-cleveland}), generated via an $L^1$-regularized logistic
    regression model.
  Center panel: empirical distribution of $\epsilon$ for
    select-regress-and-round applied to the same dataset.
  Right panel: the theoretical change in AUC, under the setup of
    Theorem~\ref{theorem:auc}.
  }
\label{fig:empirical_gamma_app}
  \vspace{-4mm}
\end{figure}

\section{Conclusion}

In this paper we introduced select-regress-and-round, a simple method for
constructing decision rules that are fast, frugal, and clear. In an analysis of
pretrial release decisions, simple rules outperformed human judges
and matched the performance of a sophisticated statistical model.
Generalizing this result, in 22 domains of varying size and complexity, the
simple mental checklists produced by the select-regress-and-round method rivaled
the performance of regularized regression models while using only a fraction of
the information.

These results complement a growing body of work in statistics and computer
science in which sophisticated algorithms are used to create interpretable
scoring systems and rule sets~\citep{ustun_2016,letham_2015,lakkaraju_2016,lakkaraju_2017}.
Many prior rule
construction methods offer great flexibility and performance~\cite{ustun_2017}, but in turn
require considerable computational expertise to carry out.
In contrast, the simple rules in this article can be created by practitioners
with only basic statistical knowledge and generic software.
For practitioners
to favor statistics over intuition,
we believe decision rules must not only be simple to apply
but also simple to create.

\begin{acks}
We thank Avi Feller, Andrew Gelman, Gerd Gigerenzer, Art Owen, and Berk Ustun
for helpful conversations.
\end{acks}

\bibliographystyle{ACM-Reference-Format}
\bibliography{mainbib,Mendeley_Simple_rules_DANY}

\end{document}